 \newtheorem{thm}{Theorem}[section]
 \newtheorem{lem}[thm]{Lemma}
 \theoremstyle{definition}
 \theoremstyle{remark}
 \numberwithin{equation}{section}
\begin{document}
%
%
%
%
%
%
%
%
%
\title[Solutions to the BBM equation with viscosity]
 {Solitary waves, periodic and elliptic solutions to the Benjamin, Bona \& Mahony (BBM) \\equation modified by viscosity}
\author[Stefan C. Mancas, Harihar Khanal]{Stefan C. Mancas, Harihar Khanal}

\address{%
Department of Mathematics\\
Embry--Riddle Aeronautical University\\
Daytona Beach, FL 32114--3900}
\email{stefan.mancas@erau.edu, harihar.khanal@erau.edu}

\author{Shardad G. Sajjadi}
\address{Center for Geophysics and Planetary Physics\br
Embry--Riddle Aeronautical University\br
Daytona Beach, FL 32114--3900\br}
\email{shardad.sajjadi@erau.edu}
\subjclass{Primary 99Z99; Secondary 00A00}

\keywords{Class file, journal}

\date{January 14, 2010}

\begin{abstract}
In this paper, we use a traveling wave reduction or a so-called spatial approximation to comprehensively investigate periodic and solitary wave solutions of the modified Benjamin, Bona \& Mahony equation (BBM) to include both dissipative and dispersive effects of viscous boundary layers.
Under certain circumstances that depend on the traveling wave velocity, classes of periodic and solitary wave like solutions are obtained in terms of Jacobi elliptic functions. 
An ad-hoc theory based on the dissipative term is presented, in which we have found a set of solutions in terms of an implicit function. Using dynamical systems theory we prove that the solutions of \eqref{BBMv} experience a transcritical bifurcation for a certain velocity of the traveling wave. Finally, we present qualitative numerical results.
\end{abstract}

\maketitle
\section{Introduction}
The first published observation about a solitary wave propagating along a uniform canal was made by John Scott Russel in 1834, who observed a solitary wave with an amplitude of about a foot and a width of about thirty feet on the Edinburgh to Glasgow canal. In his report to the British Association \cite{Russel}, Russel describes how he followed the solitary wave for more than one mile on horseback, observing that the observed solitary wave displayed a remarkable property of permanence preserving its original shape. He also noticed that taller waves will travel faster, and that solitary waves that move with different speeds, undergo a nonlinear interaction from which they emerge in their original shape.

In 1872, Joseph Boussinesq proposed a variety of possible models for describing the propagation of water waves in shallow channels \cite{Bouss}, including what is referred to as the Korteweg de Vries (KdV) equation \eqref{KdV}.

In 1895, Korteweg and de Vries, under the assumption of small wave amplitude and large wavelength of inviscid and incompressible fluids, derived an equation for the water waves, now known as the KdV equation, which also serves as a justifiable model for long waves in a wide class of nonlinear dispersive systems. KdV it has been also used to account adequately for observable phenomena such as the interaction of solitary waves and dissipationless undular shocks.
For the water wave problem, \eqref{KdV} is nondimensionalized, since the physical parameters  $|u|=3 \eta /2H$, $x=\sqrt6 x^*/H$, and $t=\sqrt{6g/H}t^*$, where $\eta$ is the vertical displacement of the free surface, $H$ is the depth of the undisturbed water, $x^*$ is dimensional distance and $t^*$ the dimensional time are all scaled into the definition of nondimensional space $x$, time $t$, and water velocity $u(x,t)$. When the physical parameters and scaling factors are appropriately absorbed into the definitions of $u$, $x$ and $t$, the KdV equation is obtained in the tidy form
\begin{equation}\label{KdV}
u_t+u_x+uu_x+u_{xxx}=0.
\end{equation}
A further reduction could be made by removing the second term of \eqref{KdV} by taking $x'=x-t$ and $t$ as independent variables, but nothing of significance is accomplished by this. Eq. \eqref{KdV}, or its equivalent without the second order term, is commonly taken as the staring point for mathematical studies of long-wave phenomena, although facts with considerable theoretical significance are already entailed in the derivation of KdV, \cite{Ben71}. Thus, the condensed form tends to disguise the meaning of the theory of the equation with regard to the original physical problem.

Although \eqref{KdV} has some remarkable properties it manifests non-physical properties; the most noticeable being unbounded dispersion relation. It is helpful to recognize that the main difficulties presented by \eqref{KdV} arise from the dispersion term and arise in  the linearized form
\begin{equation}\label{lin}
u_t+u_x+u_{xxx}=0
\end{equation} 
First note that when the solution of \eqref{lin} is expressible as a summation of Fourier components in the form $F(k)e^{-i (k x+\omega t)}$, the dispersion relation is
\begin{equation}\label{disp}
\omega=k-k^3.
\end{equation}
The phase velocity $\omega/k$ becomes negative for $k^2>1$, in contradiction of the original assumption of the forward traveling waves. More significantly, the group velocity $\frac{d\omega}{d k}=1-3k^2$ has no lower bound.

To circumvent this feature, it has been shown by Benjamin, Bona \& Mahony \cite{Ben72} that  \eqref{KdV} has an alternative format, which is called the regularized long-wave or BBM equation
\begin{equation}\label{BBM}
u_t+u_x+uu_x-u_{xxt}=0,
\end{equation}
in which the dispersion term $u_{xxx}$ is replaced by $-u_{xxt}$ that results in a bounded dispersion relation. That was utilized to prove existence, uniqueness, and regularity results \cite{Ben72}. It is contended that \eqref{BBM} is in important respects the preferable model over \eqref{KdV} which is unsuitably posed model for long waves. They also showed that  \eqref{BBM} has the same formal justification and possesses similar properties as that of KdV and its solutions generally have better smoothness properties that those of \eqref{KdV}.
The linearized version of  \eqref{BBM}  has the  dispersion relation
\begin{equation}\label{disp2}
\omega=\frac{k}{1+k^2},
\end{equation}
according to which both the phase velocity $\omega/k$ and the group velocity $\frac{d\omega}{d k}$ are bounded for all $k$. Moreover, both velocities approach zero for large $k$, which implies that fine scale features of the solution tend not to propagate.
The preference of \eqref{BBM} over \eqref{KdV} became clear in \cite{Ben72}, when the authors attempted to formulate an existence theory for \eqref{KdV}, respective to nonperiodic initial condition $u(x,0)$ defined on $(-\infty,\infty)$. In contrast, the existence and stability theory for \eqref{BBM} is essentially straight forward and perhaps the most persuasive evidence that \eqref{BBM} is better founded than \eqref{KdV} as a convenient model.

A generalization to \eqref{BBM} to include a viscous term is provided by the equation
\begin{equation}\label{BBMv}
u_t+u_x+uu_x-u_{xxt}=\nu u_{xx}
\end{equation}
where $\nu$ is transformed  kinematic viscosity coefficient of a liquid.
Here, we will study the above equation and henceforth we shall refer to \eqref{BBMv} as the modified BBM.

The structure of the paper is as follows. In \S2 a traveling wave solution of  \eqref{BBMv} will be derived. Depending on the traveling wave velocity $c$, three types of analytical solutions are found. When the viscosity is present a solution in implicit form is found using an ad-hoc theory for which the results will be shown numerically. In \S3 we will closely look at the principles of linearized stability for the traveling wave solution using dynamical systems theory for the modified BBM.  We will conclude the paper with the discussions and conclusions in \S4.

\section{Traveling solutions to the modified BBM}
The class of solitons solutions is found by employing the form  of the traveling wave solutions of \eqref{BBMv} which takes the form of the Ansatz
\begin{equation}\label{4}
u(x,t)=\phi (\zeta)
\end{equation}
where $\zeta\equiv x-ct$
is the traveling wave variable, and $c$ is a non negative translational wave velocity. The case when  $c<0$, i.e., solitons traveling in the opposite direction, can be treated in a similar fashion by letting $c\rightarrow -c$. The substitution of \eqref{4} in \eqref{BBMv} leads, after some simplification to
\begin{equation}\label{5}
(1-c)\phi_\zeta+\frac12 (\phi^2)_\zeta+c\phi_{\zeta\zeta\zeta}-\nu \phi_{\zeta\zeta}=0
\end{equation}
Assuming that $\phi,\phi_{\zeta} \rightarrow 0$ as $|\zeta|\rightarrow \pm\infty$, and by integrating once \eqref{5} we obtain the Lienard equation
\begin{equation}\label{6}
2\phi_{\zeta\zeta}=\alpha \phi_\zeta-\beta \phi^2-\gamma \phi,
\end{equation}
where $\alpha=\frac{2 \nu}{c}$, $\beta =\frac 1 c$, and $\gamma=\frac{2(1-c)}{c}$, which can be put in the first order form by making the substitution  $\omega(\phi)=\phi_{\zeta}$
\begin{equation}\label{7}
2\omega \omega_\phi=\alpha \omega-\beta \phi^2-\gamma \phi.
\end{equation}
Then, by making the substitution $\eta(\phi)=\frac2 \alpha \omega(\phi)$, the last equation  may be written in the form
\begin{equation}\label{8}
\eta \eta_\phi-\eta=b \phi^2+a\phi,
\end{equation}
where $a=-\frac{2 \gamma}{\alpha^2}$, and $b=-\frac{2 \beta}{\alpha^2}$. We recognize this as the Abel's equation of the second kind written in the canonical form.

\subsection{No viscosity, $\nu=0$}
If the viscosity is not present, i.e., $\nu=0$, then $\alpha=0$, and hence eq. \eqref{5} becomes

\begin{align}
\phi_{\zeta\zeta}+\frac{1}{2c}\phi^2=&\frac{c-1}{c}\phi \qquad \mathrm{if} \qquad c>1 \\
\phi_{\zeta\zeta}+\frac{1+c}{2}\phi^2=&-c\phi \qquad \mathrm{if} \qquad 0<c<1 
\end{align}

with analytical  traveling waves solutions
\begin{align}\label{62}
u(x,t)=&3(c-1) \mathrm{Sech}^2\Big[{\frac{\sqrt{(c-1)/c}}{2}(x-c t)}\Big]\qquad \mathrm{if} \qquad c>1  \\
u(x,t)=&-\frac{3c}{1+c} \mathrm{Sec}^2\Big[{\frac{\sqrt {c}}{2}\Big (x-t/(1+c)\Big)}\Big] \qquad \mathrm{if} \qquad 0<c<1. 
\end{align}
In the $(x,t)$ space the first set of solutions will be solitary wave like and will move with a translational velocity $c>1$; we call these fast waves. The second set of solutions will be periodic, and  will have a translational velocity $c\in (0,1)$; we will call these the slow waves that are unbounded if $\sqrt c\Big (x-t/(1+c)\Big)=(2j+1)\pi$,
$j\in Z$. 

At the boundary between the fast and slow waves we will encounter periodic solutions that will be given in terms of elliptic functions that will travel with velocity $c=1$ as we will see next.

When $c=1$, \eqref{6} becomes
\begin{equation}\label{el1}
\frac{\phi^2}{2}+\phi_{\zeta\zeta}=0,
\end{equation}
which is not by all means a simpler equation.  By multiplying by $\phi_{\zeta}$ and integrating once again we obtain
\begin{equation}\label{el2}
\frac{\phi^3}{3}+(\phi_{\zeta})^2=\frac{A^3}{3},
\end{equation}
where A is some nonzero constant of integration. Now let's use the substitution $\mu^2=A-\phi$, where $\mu=\mu(\zeta)$. Hence \eqref{el2} becomes
\begin{equation}\label{el3}
\mu_{\zeta}= \pm\frac {\sqrt 3}{6} \sqrt{\mu^4-3A\mu^2+3A^2}.
\end{equation}
Moreover, let's assume $\mu(\zeta)=3^{1/4}A^{1/2}z(\zeta)$, then \eqref{el3} becomes
\begin{equation}\label{el4}
z_{\zeta}= \pm\frac {A^{1/2}}{3^{1/4}2} \sqrt{z^4-\sqrt{3}z^2+1}.
\end{equation}
This differential equation will be solved using Jacobian elliptic functions.\\
\begin{lem}
If $Z=1+2z^2 \cos {2\alpha}+z^4$, then $$u(x,k)=\int^{x}_{0}\frac{dz}{\sqrt{Z}}=\frac 12 sn^{-1}\frac{2x}{1+x^2},$$ with $k=\sin{\alpha}$ 
\end{lem}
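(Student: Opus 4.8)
The plan is to reduce the integral on the left to the standard integral representation of the inverse Jacobi sine by a single rational substitution. Recall that $\mathrm{sn}(\cdot,k)$ satisfies
$$\mathrm{sn}^{-1}(y)=\int_0^y\frac{dt}{\sqrt{(1-t^2)(1-k^2t^2)}},$$
so it suffices to show that the substitution $t=\frac{2z}{1+z^2}$ carries $\int_0^x Z^{-1/2}\,dz$ into exactly half of this integral, with upper limit $\frac{2x}{1+x^2}$.

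First I would record the identity forced by $k=\sin\alpha$, namely $\cos 2\alpha=1-2\sin^2\alpha=1-2k^2$, so that $Z=1+(2-4k^2)z^2+z^4$. Next, with $t=2z/(1+z^2)$ I would compute the two factors under the radical. A short calculation gives
$$1-t^2=\frac{(1-z^2)^2}{(1+z^2)^2},\qquad 1-k^2t^2=\frac{(1+z^2)^2-4k^2z^2}{(1+z^2)^2}=\frac{Z}{(1+z^2)^2},$$
where the last equality is precisely the expansion from the previous step. Combining these with the differential $dt=\frac{2(1-z^2)}{(1+z^2)^2}\,dz$, the factor $(1-z^2)$ cancels against $\sqrt{1-t^2}$ and all powers of $(1+z^2)$ cancel, leaving
$$\frac{dt}{\sqrt{(1-t^2)(1-k^2t^2)}}=\frac{2\,dz}{\sqrt{Z}}.$$
Integrating from $z=0$ (where $t=0$) to $z=x$ (where $t=2x/(1+x^2)$) and dividing by two then yields the claimed identity.

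The only delicate point is the sign of $\sqrt{1-t^2}=|1-z^2|/(1+z^2)$: the map $t=2z/(1+z^2)$ is monotone increasing only on $0\le z<1$, where $1-z^2>0$ and the cancellation proceeds as written, so I would state the corresponding restriction on $x$ explicitly. I expect this bookkeeping of the range of validity to be the only thing requiring care; the algebra itself is routine once $\cos 2\alpha=1-2k^2$ is in hand. An equivalent and shorter route, which I would at least mention, is to differentiate the right-hand side directly: using $\frac{d}{du}\mathrm{sn}(u)=\mathrm{cn}(u)\,\mathrm{dn}(u)$ together with $\mathrm{cn}^2=1-\mathrm{sn}^2$ and $\mathrm{dn}^2=1-k^2\mathrm{sn}^2$, one verifies that $\frac{d}{dx}\big[\frac12\,\mathrm{sn}^{-1}(\frac{2x}{1+x^2})\big]=Z^{-1/2}$ and that both sides of the asserted identity vanish at $x=0$.
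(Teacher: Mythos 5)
Your proof is correct and takes essentially the same route as the paper: the paper reduces to the normal form $\int_0^{y}\frac{\frac12\,dy}{\sqrt{(1-y^2)(1-k^2y^2)}}$ via the two-stage substitution $z=\tan\theta$ followed by $y=\sin 2\theta$, and since $\sin 2\theta = \frac{2\tan\theta}{1+\tan^2\theta}$ this composition is exactly your single rational substitution $t=\frac{2z}{1+z^2}$, with your identity $1-k^2t^2 = Z/(1+z^2)^2$ playing the role of the paper's trigonometric simplification. Your explicit bookkeeping of the sign of $\sqrt{1-t^2}=|1-z^2|/(1+z^2)$, i.e., the restriction $0\le x<1$ where the substitution is monotone, makes precise a caveat the paper leaves implicit (it arises there at $\theta=\pi/4$, where $y=\sin2\theta$ ceases to be monotone).
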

\begin{proof}
Putting $z=\tan{\theta}$, we find $$u(x,k)=\int^{\tan^{-1}x}_{0}\frac{d \theta}{\sqrt{1-\sin^{2}\alpha \sin^{2}{2 \theta}}},$$ followed by  $y=\sin{2\theta}$, which in turn leads to
$$u(x,k)=\int^{\frac{2x}{1+x^2}}_{0}\frac{\frac 12 d y}{\sqrt{(1-y^2)(1-k^2 y^2)}},$$ where the integrand is the elliptic integral of the first kind \cite{Bowman}.
\end{proof}

Therefore, solving \eqref{el4}, we obtain the solution in $z$
\begin{equation}\label{el5}
\frac{2 z}{1+z^2}=\pm sn\big(A^{1/2}/3^{1/4}\zeta,k\big),
\end{equation}
where $k=\sin{\frac{5 \pi}{12}}=\frac{\sqrt{3}+1}{2 \sqrt{2}}$ is the modulus of the Jacobian elliptic function. 

It follows that in $\mu$ we will have
\begin{equation}\label{el6}
\frac{2 a \mu }{a^2+\mu^2}=\pm sn(3^{-1/2} a\zeta,k),
\end{equation}
where $a=3^{1/4}A^{1/2}$.
By solving the quadratic, we obtain
\begin{equation}\label{el7}
\mu=\pm a\frac{1\pm cn(3^{-1/2}a\zeta,k)}{sn(3^{-1/2}a\zeta,k)}.
\end{equation}

Hence, the analytical solution to the \eqref{BBM} without viscosity $\nu=0$ and traveling wave velocity $c=1$ is finally
\begin{eqnarray}\label{el8}
u(x,t)=&A\Big[1-3^{1/2}\Big(\frac{1\pm cn\big(A^{1/2}/3^{1/4}(x-t),k\big)}{sn\big(A^{1/2}/3^{1/4}(x-t),k\big)}\Big)^2\Big] \notag\\ 
=&A\Big[1-\sqrt{3
}\frac{1\mp cn\big(A^{1/2}/3^{1/4}(x-t),k\big)}{1 \pm cn\big(A^{1/2}/3^{1/4}(x-t),k\big)}\Big] .
\end{eqnarray}
These critical solutions are unbounded along the parallel lines in the $(x,t)$ plane where $x-t=\frac{2a}{nAK}$, $n\geq 0$, and $2K$ periodic with $K$ given by the complete elliptic integral
\begin{equation}
\frac K2=\int_0^{1}\frac{dz}{\sqrt{z^4-\sqrt{3}z^2+1}}=2.76806
\end{equation}

The expressions \eqref{62},(2.9) and \eqref{el8}, describe the whole class of solitary wave solutions, with spectrum $c \in(0,\infty)$; but we shall generally use the symbol $\phi$ to mean some specific member of the class, referring  to as the solitary wave or periodic solution, which is in concordance with \cite{Ben72}. They present periodic and solitary wave solutions to \eqref{BBM}, with no restrictions on the constant $A$, which may be advantageous when adapting  results to physical problems. The validity of the equation as an approximate model for water waves in real systems depends on the magnitude $|u|$ of solutions being every where small, and so it is warranted to require that the velocity $c$ should be chosen such that the waves  will travel at a speed that is asymptotically close to the linear  shallow water speed about the point $x=t$, or equivalently $x^*=\sqrt{gH}t^*$. Hence, $c$ can not be much larger than one.\\
The traveling waves solutions obtained by \eqref{62}, (fast waves) 
are plotted using
MATLAB function \verb+surf+ and shown in Fig.\ref{fig1}.
To plot the periodic  solution given by (2.9) \eqref{el8}
is more involved. We used MATLAB function \verb+ellipj+ to compute the
elliptic functions $cn$ and $sn$ of \eqref{el8}. These solutions have singularities along $\sqrt c\Big (x-t/(1+c)\Big)=(2j+1)\pi$ and $x-t=\frac{2a}{nAK}$ respectively, and therefore we plot them in different domains see Figs.\ref{fig2}, \ref{fig3}.
\begin{figure}[!ht]
					\begin{center}
\includegraphics[width=0.40\textwidth]{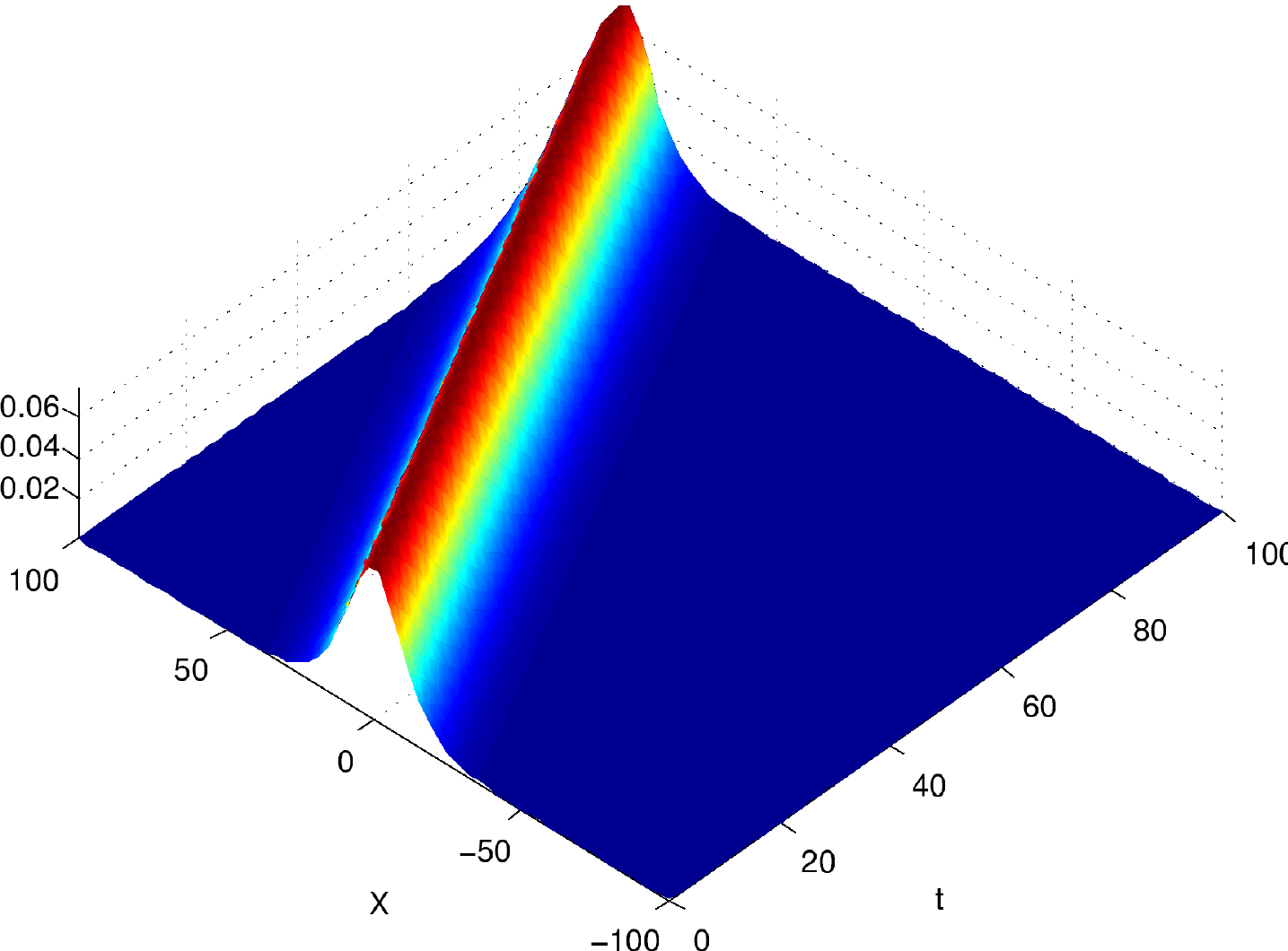}
					\end{center}
				\caption{Exact solutions of the \eqref{BBM} given by \eqref{62} with $c=1.025$.}\label{fig1}
\end{figure}
\begin{figure}[!ht]
					\begin{center}
\includegraphics[width=0.40\textwidth]{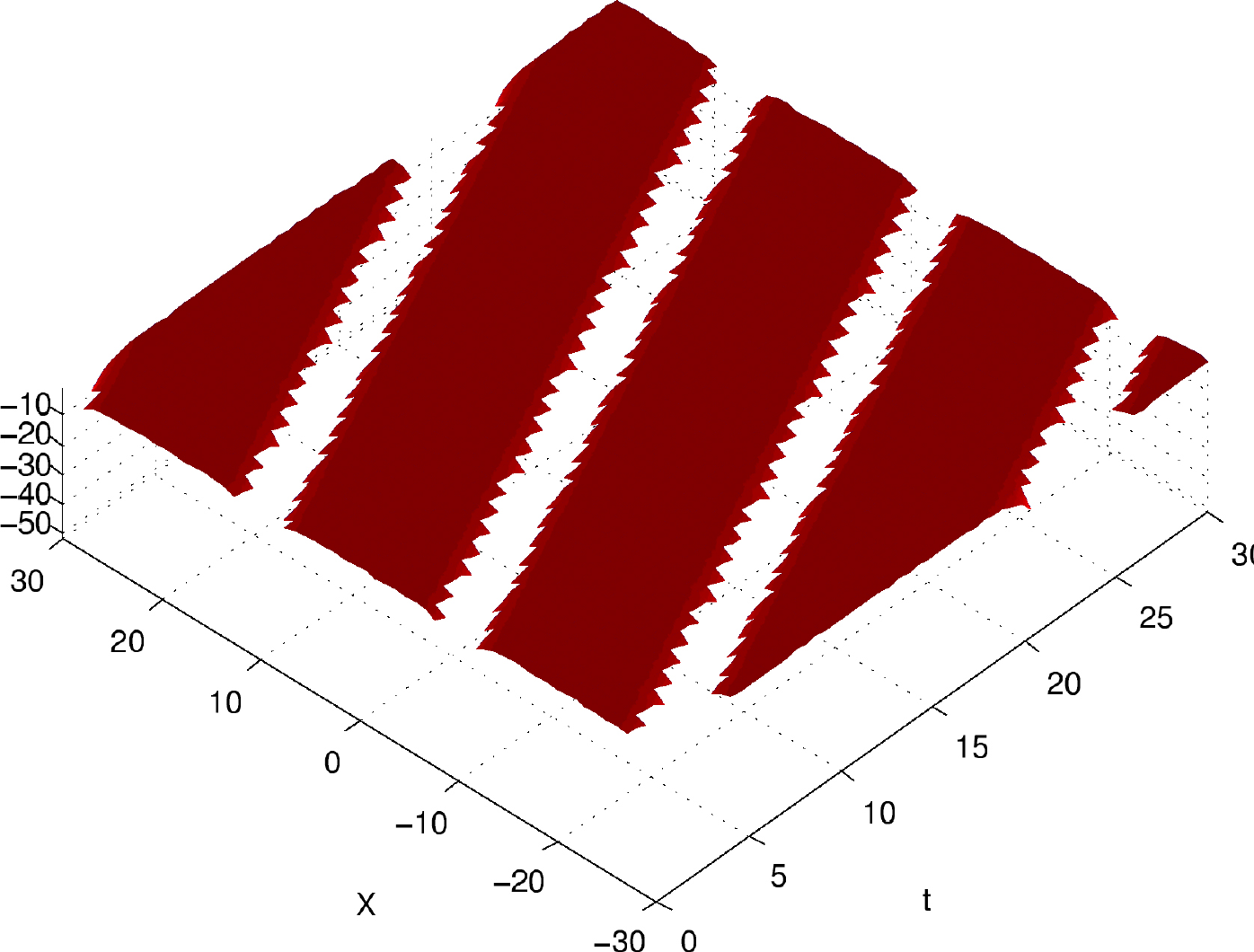}
					\end{center}
				\caption{Exact solutions of the \eqref{BBM} given by (2.9) with $c=0.1$.}\label{fig2}
\end{figure}
\begin{figure}[!ht]
					\begin{center}
\includegraphics[width=0.40\textwidth]{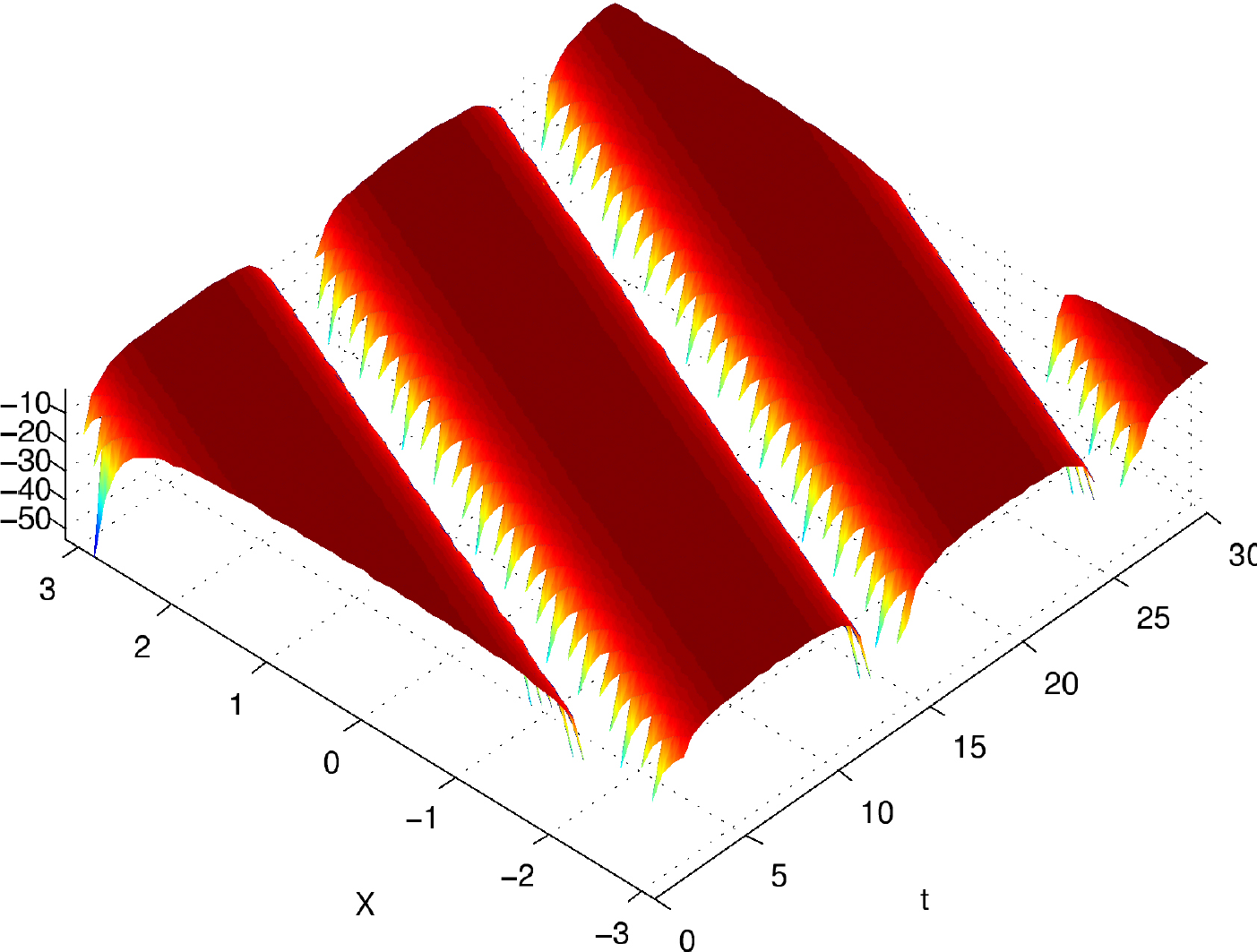}
					\end{center}
				\caption{Exact solutions of \eqref{BBM} given by \eqref{el8} with $c=1$.}\label{fig3}
\end{figure}

\subsection{Viscous layers present, $\nu>0$}

For the case when  viscous boundary layers are present, we will use 
an algebraic method to solve \eqref{BBMv}.
We will use at last two transformations on both the independent variable and  the dependent variable on the Abel's equation
\begin{align}
\tau&=\Big(1-\frac {\eta}{\phi}\Big)^2 \label{9}\\
\chi&=-2\Big(a+b \phi+\frac \eta \phi-\frac {\eta^2}{\phi^2}\Big)\label{10}
\end{align}
which will allow us to write \eqref{8} as
\begin{equation} \label{11}
\chi \chi_{\tau}-\chi=2 \big(\tau-2 \tau^{\frac 12}+1-a+a \tau^{-\frac 12}\big)
\end{equation}

According to \cite{Polyanin} the last equation has the closed form solution
\begin{equation}\label{12}
\chi(\tau)=\frac{a+(1-a)\tau^{\frac 12}-2 \tau +\tau^{\frac 32}}{3(\tau^{\frac 12}-1)(\tau-\tau^{\frac 12}-a)}\Big( 12 a \tau^{\frac 12} +12(1-a) \tau-8 \tau^{\frac 32}+3 \tau^2\Big)
\end{equation}
Now, the idea is to use \eqref{9}, \eqref{10}, and \eqref{12} to obtain an algebraic equation in one variable only. Two achieve this, let us equate \eqref{10} and \eqref{12} and use the same variable $x=\frac \eta \phi$ in all three mentioned equations. Therefore, we obtain
\begin{eqnarray}\label{13}
&\frac{a+(1-a)(1-x)-2(1-x)^2+(1-x)^3}{-3x[(1-x)^2-(1-x)-a]}= \notag\\ 
=&\frac{-2(a + b \phi +x(1-x))}{(1-x)[12a+12(1-a)(1-x)-8(1-x)^2+3 (1-x)^3]}
\end{eqnarray}
After few algebraic manipulations \eqref{13} can be written in the quartic form
\begin{equation}\label{14}
3 x^4-4 x^3-12 a x^2+6(2a-1)x+6 a+ 7 +6 b \phi=0.
\end{equation}
Now, we will use the fact that $x=\frac{2 \omega(\phi)}{\alpha \phi}$ in \eqref{14} and we obtain
\begin{equation}\label{15}
\omega^4(\phi)+a_3(\phi)\omega^3(\phi)+a_2(\phi)\omega^2(\phi)+a_1(\phi)\omega(\phi)+ a_0(\phi)=0,
\end{equation}
where the coefficients of the quartic are
\begin{align}\label{coef}
a_3(\phi)&=-\frac{4 \nu \phi}{3 c} \\
a_2(\phi)&=\frac{4 \phi^2(1-c)}{c}\notag \\
a_1(\phi)&=-\frac{2 \nu \phi^3\big(\nu^2+2c(1-c)\big)}{c^3}\notag \\
a_0(\phi)&=\frac{\nu^2 \phi^4\Big(7 \nu^2-3c\big(\phi+2(1-c)\big)\Big)}{3 c^4}. \notag
\end{align}
Since 
\begin{equation}\label{ode}
\omega(\phi)=d\phi/d\zeta,
\end{equation} 
we obtain the solution $\phi(x-ct)$ implicitly by solving the first order ODE (\ref{ode}) for 
$\phi(\zeta)$ with $\omega(\phi)$ as a root of the algebraic equation (\ref{15}).

First, we found the roots $\omega_{1234}(\phi)$ using MATLAB function
\verb+solve+. The expressions are highly nonlinear, extremely lengthy and
therefore not presented here. As such, the solutions of the ODE (\ref{ode})
for the nonzero values of $\nu$ take very long time to compute and due to the possible
subtractive cancellation the approximate solutions may not be satisfactory. 
However, we can compute the roots for the cases with $\nu=0$ and $\nu=1$ easily.
We solve the ODE (\ref{ode}) using the MATLAB function \verb+ode45+ using one of the 
quartic roots $\omega_{i}$ for some $i$ of equation (\ref{15}) for both viscosity parameters $\nu=0$ and $\nu=1$, 
and traveling velocity speed $c=1.025$ and $c=0.1$.
The numerical solutions $\phi(\zeta)$ are plotted as $\phi(x-ct)$ versus $t$, 
and they qualitatively agree well with the closed form solutions 
presented in \S 2.1.
\begin{figure}[!ht]
					\begin{center}
\includegraphics[width=0.40\textwidth]{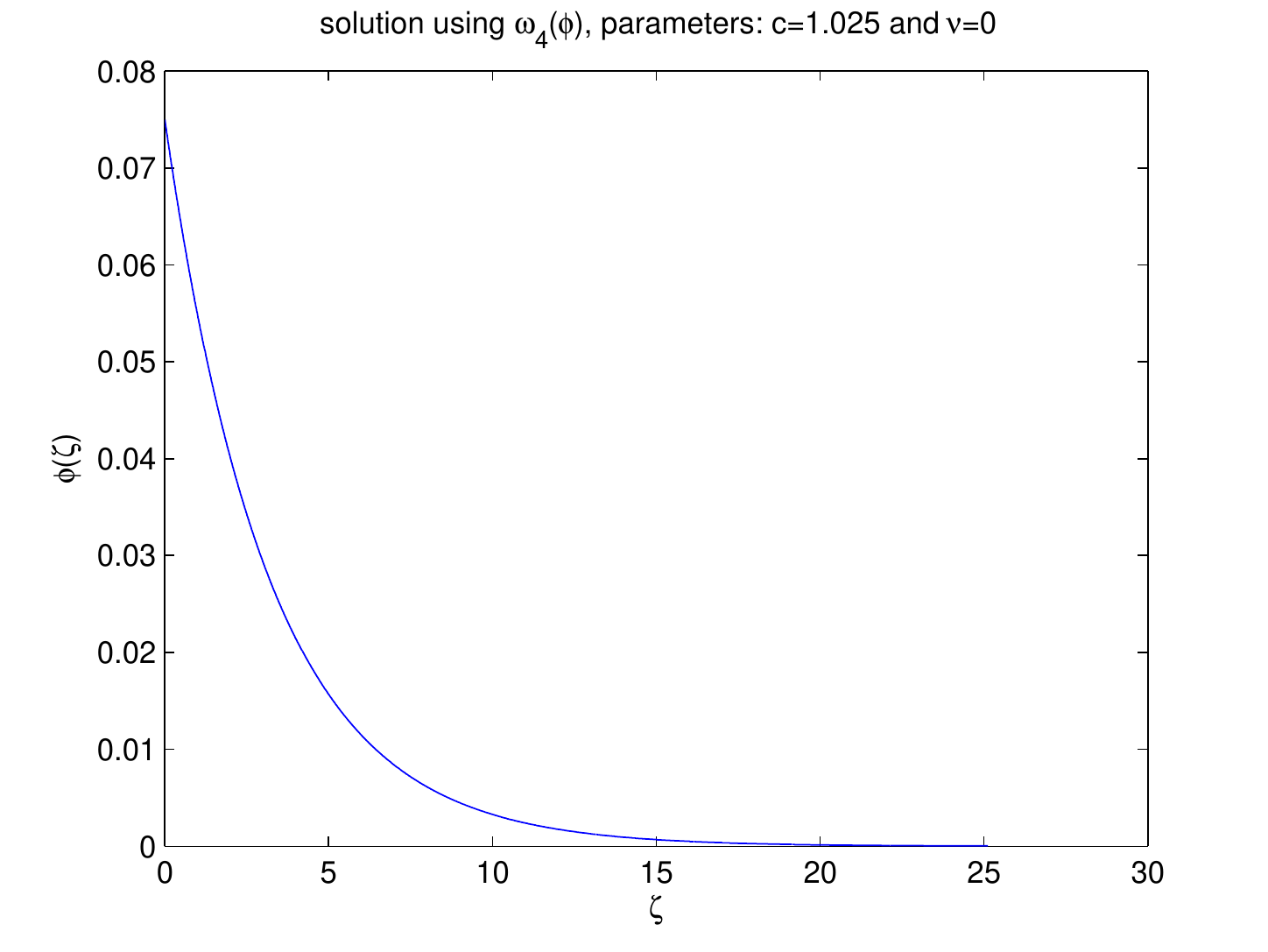}
					\end{center}
				\caption{Numerical solutions to \eqref{BBM} (Algebraic Method) predicting the traveling waves
solutions for $c>1$.}\label{fig4}
\end{figure}
\begin{figure}[!ht]
					\begin{center}
\includegraphics[width=0.40\textwidth]{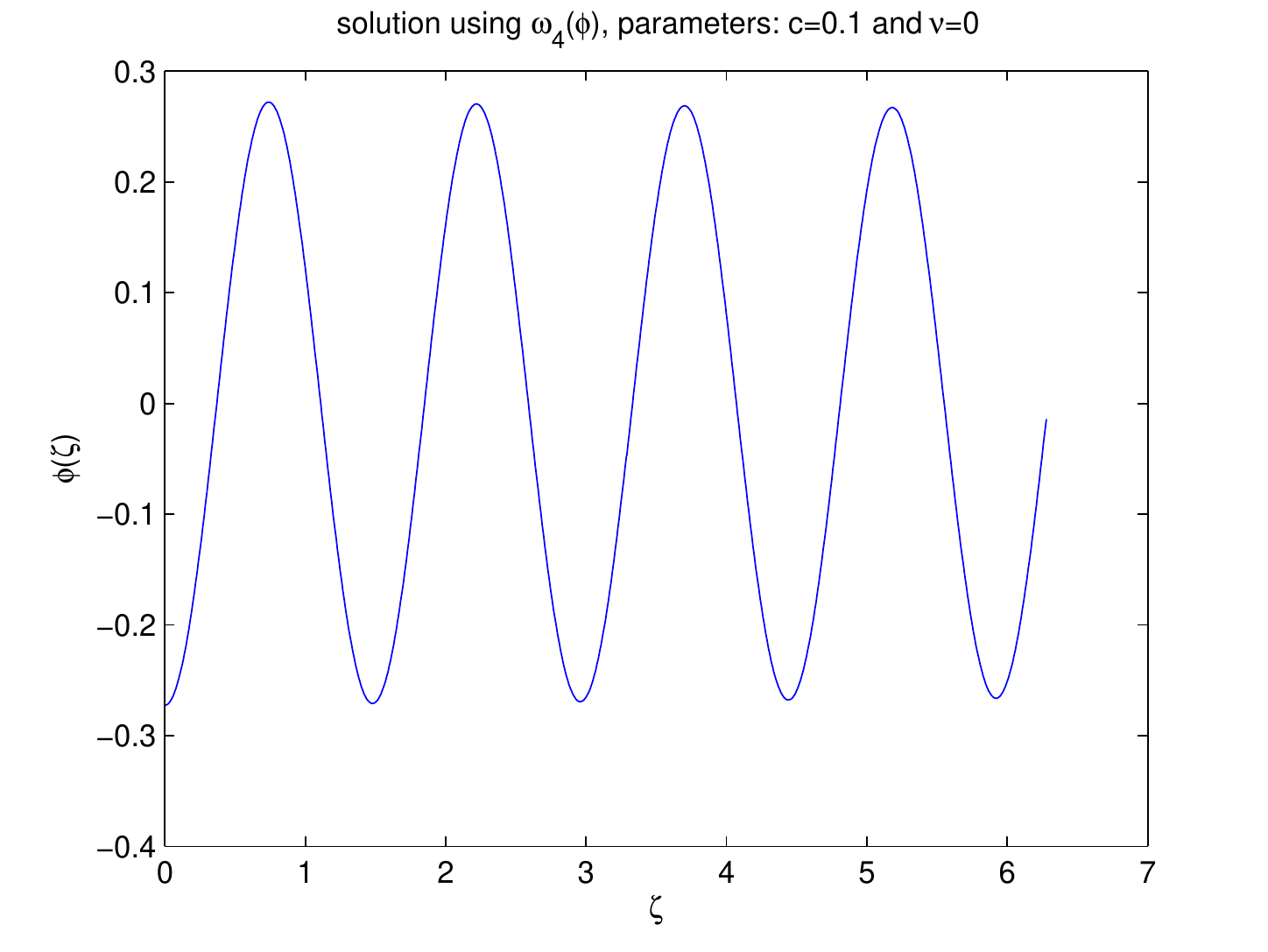}
					\end{center}
				\caption{Numerical solutions to \eqref{BBM} (Algebraic Method) predicting the
periodic solutions for $c<1$.}\label{fig5}
\end{figure}
\begin{figure}[!ht]
					\begin{center}
\includegraphics[width=0.40\textwidth]{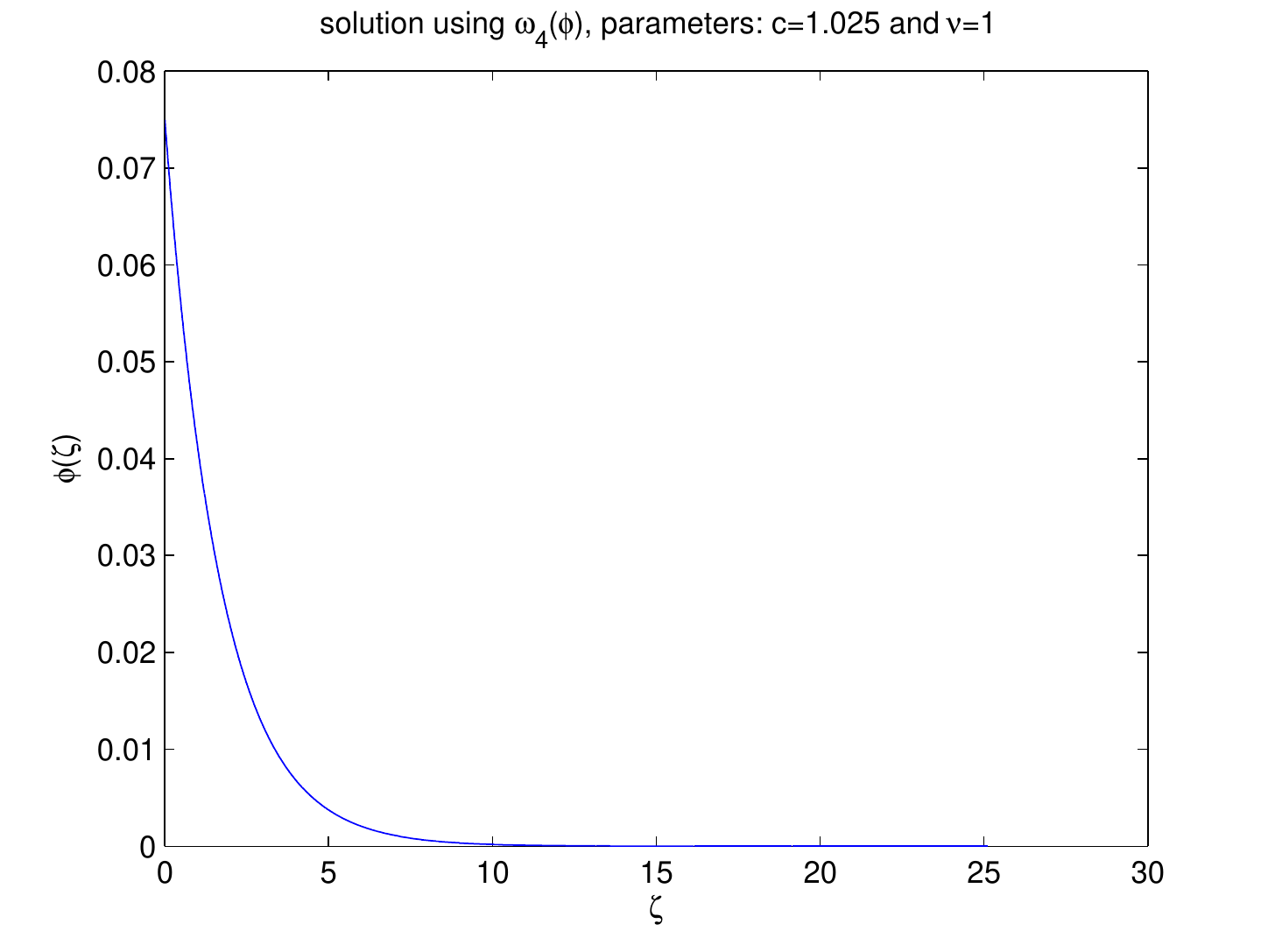}
					\end{center}
				\caption{Numerical solutions to \eqref{BBMv} (Algebraic Method) predicting the traveling waves solutions for $c>1$.}\label{fig6}
\end{figure}
\begin{figure}[!ht]
					\begin{center}
\includegraphics[width=0.40\textwidth]{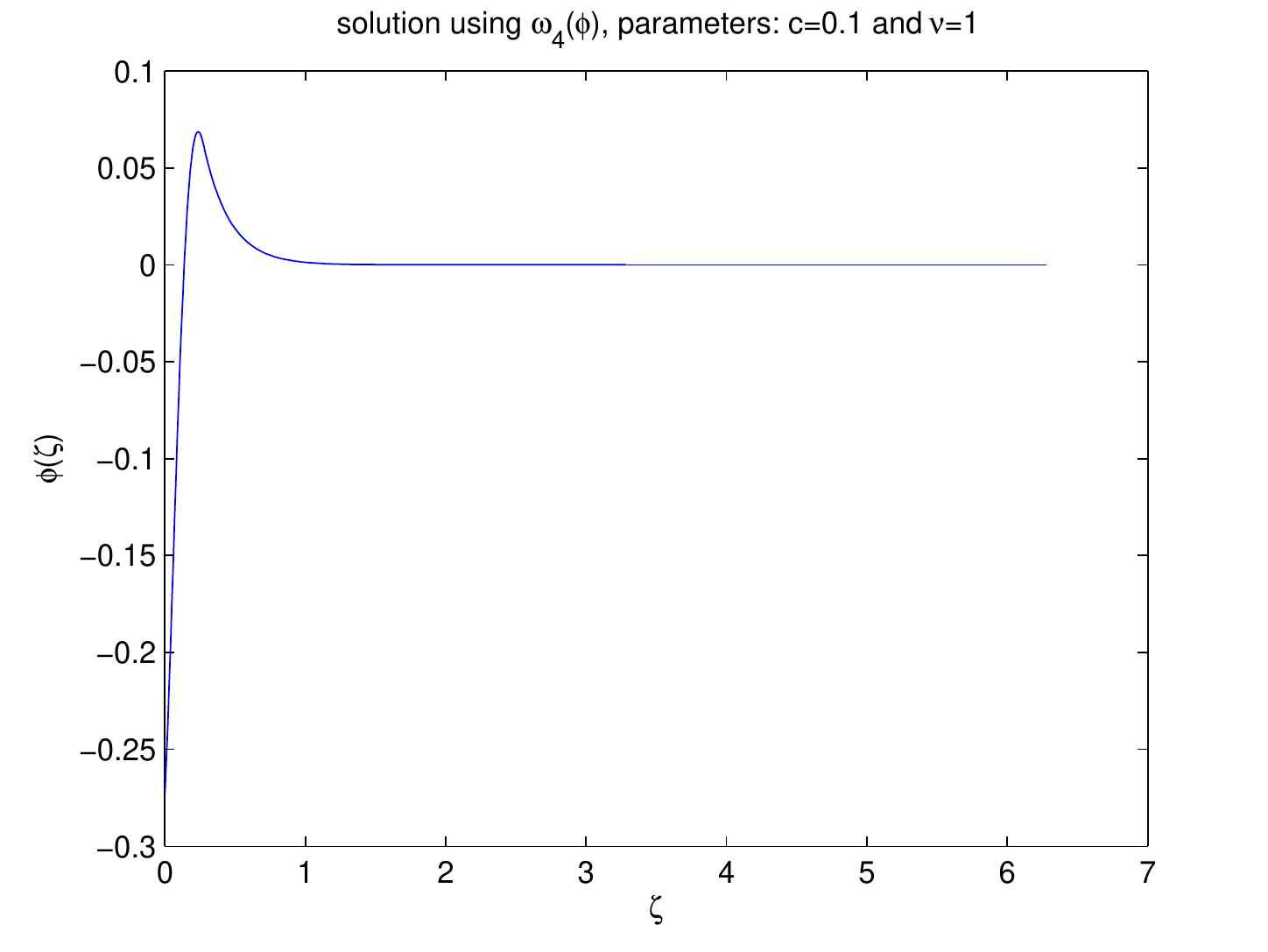}
					\end{center}
				\caption{Numerical solutions to \eqref{BBMv} (Algebraic Method) predicting the periodic solutions for $c<1$.}\label{fig7}
\end{figure}

In the absence of viscosity $\nu=0$, we obtained both traveling wave solution for $c>1$ in Fig.\ref{fig4}, and the periodic solution for $c>1$ in Fig.\ref{fig5}. In the presence of viscosity, $\nu=1$ we expect some damping effect as time progresses. This is depicted in Figs.\ref{fig6},\ref{fig7}. 

From a qualitative point of view Fig.\ref{fig4} is to be compared to Fig.\ref{fig1} and Fig.\ref{fig5} to Fig.\ref{fig2}.

\section{Stability of the viscous waves}

In this section we consider the two-mode dynamical system of \eqref{6}
\begin{align}\label{17}
\dot{\phi}&=\psi\\
\dot{\psi}&=\frac{\alpha}{2}\psi-\frac{\gamma}{2}\phi-\frac{\beta}{2} \phi^2\notag
\end{align}
with equilibrium points in the phase plane $(\phi,\psi)$ at $(0,0)$ and\\ $(-\frac{\gamma}{\beta},0)=(2(c-1),0)$. All the equilibrium points lie only in the plane $\psi=0$, in the $(\phi,\psi,c)$ space. The bifurcation curves are given by $\phi\big(\phi-2(c-1)\big)=0$, which are two straight lines intersecting at $c=1$. There is a bifurcation point at $c=1$, since the number of equilibrium points changes from two ($0<c<1$), to one ($c=1$)  and back to two ($c>1$), as $c$ increases.

Near the origin $(\phi_0,\psi_0)=(0,0)$
\begin{eqnarray}
\left[\begin{array}{cc}
\dot{\phi}\\
\dot{\psi}\\
\end{array}\right]\approx\left[\begin{array}{cc}
0& 1\\
-\frac{\gamma}{2} & \frac{\alpha}{2}\\
\end{array}\right]\left[\begin{array}{cc}
\phi\\
\psi\\
\end{array}\right]=\left[\begin{array}{cc}
0& 1\\
-\frac{1-c}{c} & \frac{\nu}{c}\\
\end{array}\right]\left[\begin{array}{cc}
\phi\\
\psi\\
\end{array}\right].
\end{eqnarray}
Following standard methods of phase--plane analysis the characteristic polynomial of the Jacobian matrix of \eqref{17} evaluated at the fixed point $(\phi_0,\psi_0)$ is
\begin{equation} \label{charac1}
g_0(\lambda)=\lambda^2-p_0\lambda +q_0=0,
\end{equation}
where $p_0=\frac{\nu}{c}$, and $q_0=\frac{1-c}{c}$.

Since $\nu>0$, $c>0$, then $p_0>0$, hence the origin is \underline{unstable}. Also, putting $\Delta=p_0^2-4q_0=\frac{\nu^2-4c(1-c)}{c^2}$,
we have the following cases:
\begin{itemize}
\item[(i)] $0<c<1$, gives $q_0>0$. If $\nu>2\sqrt{c(1-c)}>0$, the origin is unstable \emph{node} see Fig. \ref{fig8}; if $0<\nu<2\sqrt{c(1-c)}$, the origin is unstable \emph{spiral}, see Fig. \ref{fig9}.
\item[(ii)] $c>1$, gives $q_0<0\Rightarrow \Delta >0$, hence the origin is a \textit{saddle} point, see Figs. \ref{fig10},\ref{fig11}.
\end{itemize}

\begin{figure}[!ht]
					\begin{center}
\includegraphics[width=0.5\textwidth]{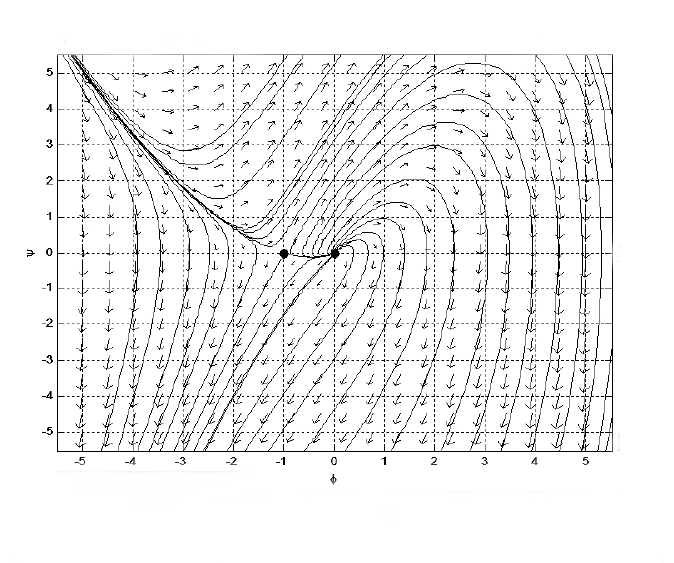}
					\end{center}
				\caption{Phase-plane for $c=0.5, \nu=1, (0.0)$ node, $(-1,0)$ saddle}\label{fig8}
\end{figure}
\begin{figure}[!ht]
					\begin{center}
\includegraphics[width=0.5\textwidth]{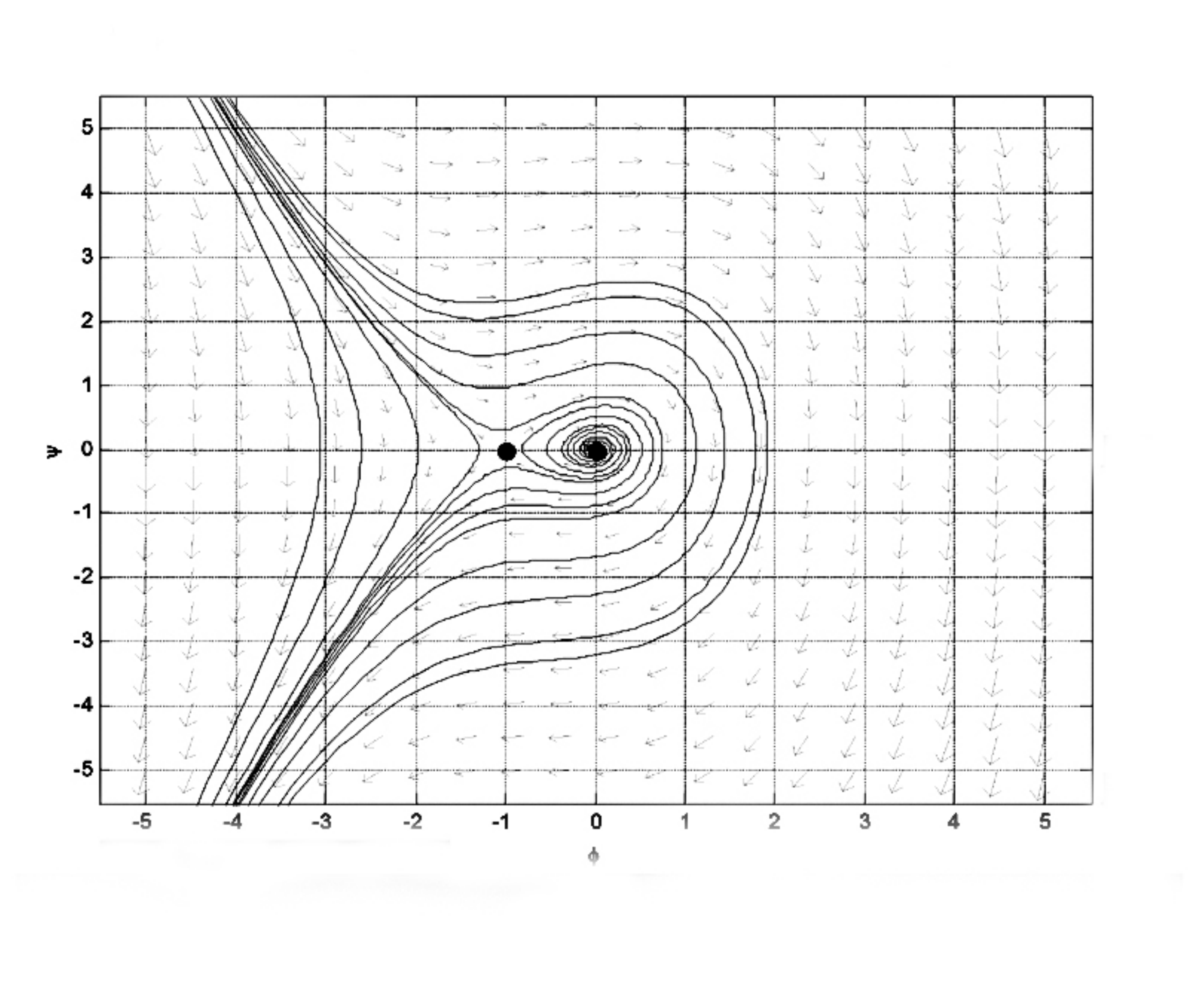}
					\end{center}
				\caption{Phase-plane for $c=0.5, \nu=0.1, (0.0)$ spiral, $(-1,0)$ saddle}\label{fig9}
\end{figure}

Near the secondary fixed point $(\phi_1,\psi_1)=(2(c-1),0)$
\begin{eqnarray}
\left[\begin{array}{cc}
\dot{\phi}\\
\dot{\psi}\\
\end{array}\right]\approx\left[\begin{array}{cc}
0& 1\\
-\frac{\gamma}{2}-2\beta(c-1) & \frac{\alpha}{2}\\
\end{array}\right]\left[\begin{array}{cc}
\phi\\
\psi\\
\end{array}\right]=\left[\begin{array}{cc}
0& 1\\
\frac{1-c}{c} & \frac{\nu}{c}\\
\end{array}\right]\left[\begin{array}{cc}
\phi\\
\psi\\
\end{array}\right].
\end{eqnarray}
The characteristic polynomial of the Jacobian matrix of \eqref{17} evaluated at the fixed point $(\phi_1,\psi_1)$ is
\begin{equation} \label{charac2}
g_1(\lambda)=\lambda^2-p_1\lambda +q_1=0,
\end{equation}
where $p_1=p_0=\frac{\nu}{c}$, and $q_1=-q_0=-\frac{1-c}{c}$.

Since $p_1=p_0$, the second fixed point is also \underline{unstable}, and moreover we have the cases:

\begin{itemize}
\item[(i)] $0<c<1$, gives $q_1<0\Rightarrow \Delta>0$, hence the fixed point is a \textit{saddle} point, see Figs. \ref{fig8},\ref{fig9}.
\item[(ii)] $c>1$, gives $q_1>0$. If $0<\nu<2\sqrt{c(c-1)}$, the fixed point is unstable \emph{spiral}, Fig. \ref{fig11} or an unstable \emph{node} for $\nu>2\sqrt{c(c-1)}$, Fig. \ref{fig10}.
\end{itemize}
All the other remaining cases, i.e, both fixed points collide, are degenerate, since if $c=1$, $g(\lambda)=\lambda^2-\nu \lambda$.

\begin{figure}[!ht]
					\begin{center}
\includegraphics[width=0.5\textwidth]{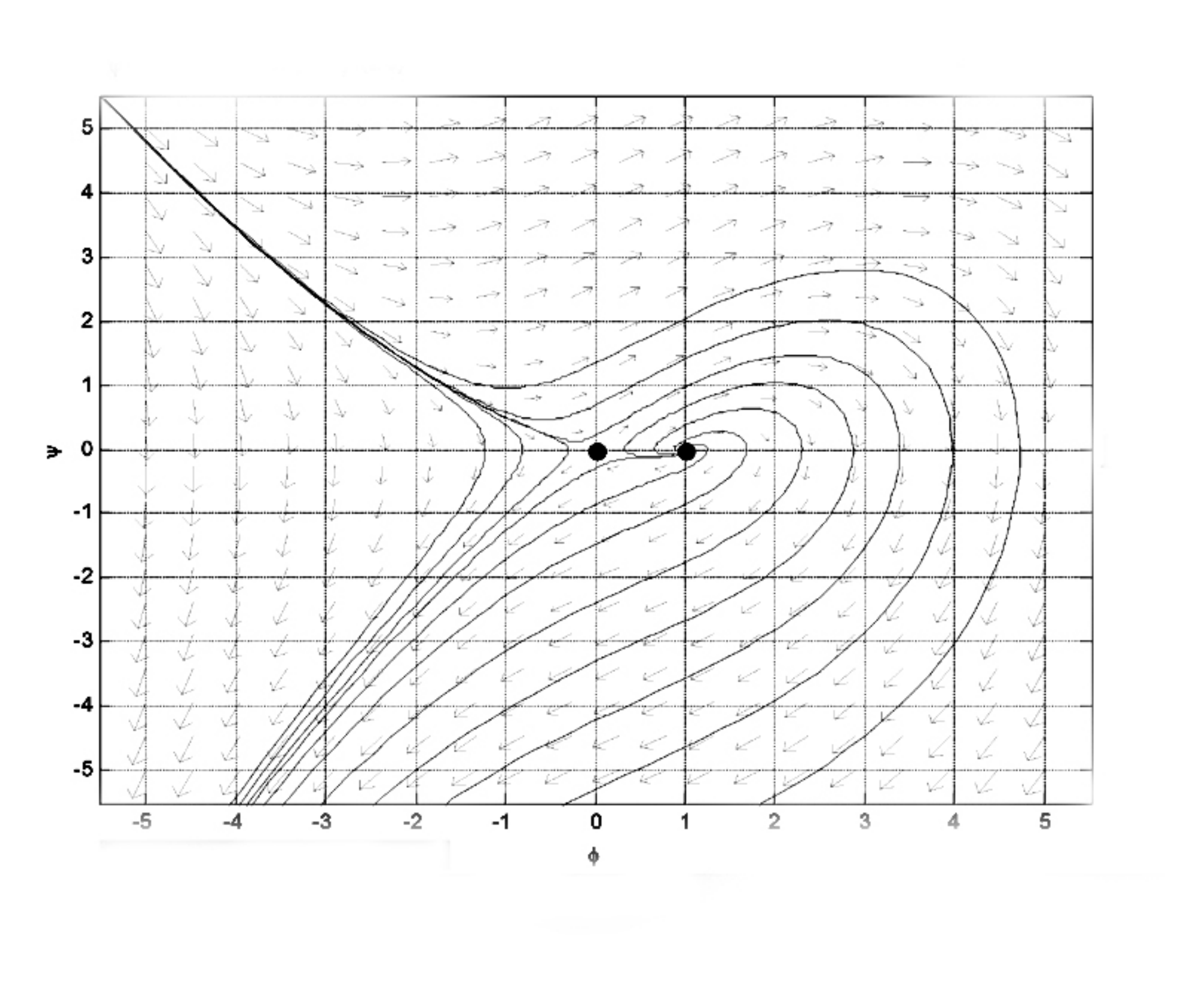}
					\end{center}
				\caption{Phase-plane for $c=1.5, \nu=1, (0.0)$ saddle, $(1,0)$ node}\label{fig10}
\end{figure}
\begin{figure}[!ht]
					\begin{center}
\includegraphics[width=0.5\textwidth]{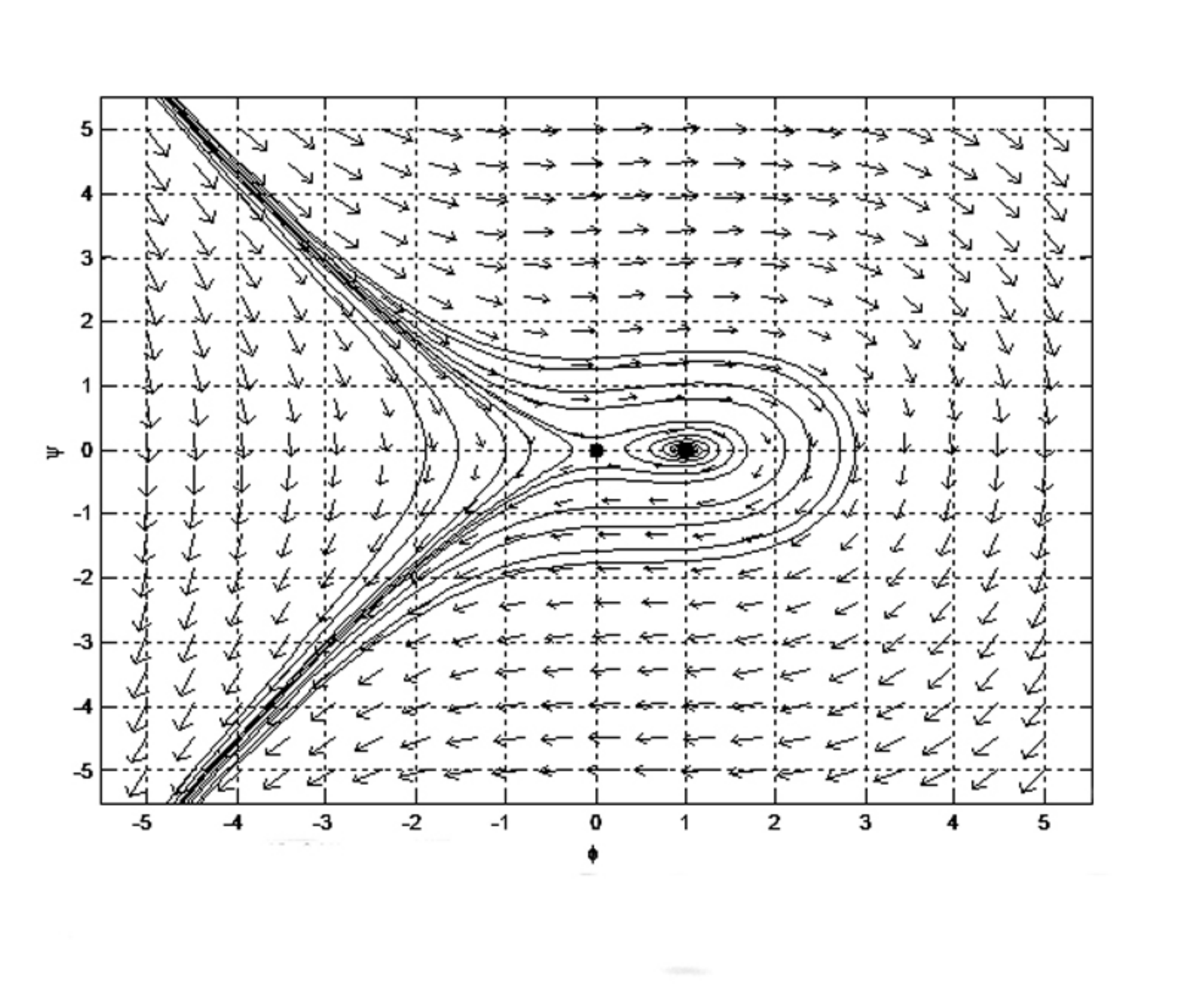}
					\end{center}
				\caption{Phase-plane for $c=1.5, \nu=0.1, (0.0)$ saddle, $(1,0)$ spiral} \label{fig11}
\end{figure}

Note that when there is no viscosity, the system has Hamiltonian $H(\phi,\psi)=\frac{1}{2}\psi^2+\frac{\gamma}{4}\phi^2+\frac{\beta}{6}\phi^3$. Therefore, along any phase path $H(\phi,\psi)=constant$. In this case, $p_0=p_1=0$, and hence
\begin{itemize}
\item[(i)] $(\phi_0,\psi_0)$ is a \emph{saddle} and $(\phi_1,\psi_1)$ is a \emph{center} when $q_1>0$, and
\item[(ii)] $(\phi_0,\psi_0)$ is a \emph{center} and $(\phi_1,\psi_1)$ is a \emph{saddle} when $q_1<0$.
\end{itemize}
Therefore, in this case the unstable spirals from Figs. \ref{fig9}, \ref{fig11} (which correspond to the case with small viscosity $\nu=0.1$) become centers when there is no viscosity $\nu=0$.

This is an example of a transcritical bifurcation where, at the intersection of the two bifurcation curves $\phi=0$ and $\phi=2(c-1)$, the equilibrium changes from one curve to the other at the bifurcation point. As $c$ increases through one,  the saddle point collides with the unstable node, and then remains there whilst the unstable node or spiral moves away from $(\psi_1,\phi_1)$.

\section{Summary and conclusions}
In this paper a basic theory for the BBM equation \eqref{BBM}, and its extension \eqref{BBMv} to include the dissipation term was shown. When the viscous terms are not present, \eqref{BBM} has traveling wave solutions that depend critically on the traveling wave velocity. When the velocity is sub unitary, it has solitary wave like solutions. If the velocity is super unitary, the solutions become periodic and unbounded. At the interface between the two cases, when $c=1$, \eqref{BBM} has periodic solutions in terms of the elliptic functions. Also, and ad-hoc theory based on the dissipative term was presented, in which we have found a set of solutions in terms of an implicit function that was solved numerically in MATLAB. There is a good agreement in the qualitative behavior of the solutions obtained numerically using the algebraic method and the closed form solutions.  Based on dynamical systems theory we have proved that the solutions of \eqref{BBMv} experience a transcritical bifurcation when $c=1$, where at the intersection of the two bifurcation curves, stable equilibrium changes from one curve to the other at the bifurcation point. As the velocity changes, the saddle point collides with the node at the origin, and then remains there, while the stable node moves away from the origin.



\subsection*{Acknowledgment}
The authors would like to acknowledge extremely insightful comments by David Ross on the theory of shallow water waves.
\end{document}